\pgfplotsset{compat=1.3}
\newtheorem{theorem}{Theorem}
\newtheorem{proposition}{Proposition}
\begin{document}

\linespread{0.9}
\title{Z3RO Precoder Canceling Nonlinear Power Amplifier Distortion in Large Array Systems}

\author{\IEEEauthorblockN{Fran\c{c}ois Rottenberg,
		Gilles Callebaut,
		Liesbet Van der Perre
	}
	\IEEEauthorblockA{KU Leuven, ESAT-WaveCore, Ghent Technology Campus, 9000 Ghent, Belgium
	}
}

\maketitle

\begin{abstract}
	Large array-based transmission uses the combination of a massive number of antenna elements and clever precoder designs to achieve array gain and spatially multiplex different users. These precoders require linear front-ends, and more specifically linear \glspl{pa}. However, this reduces energy efficiency since \glspl{pa} are most efficient close to saturation, where they generate most nonlinear distortion. Moreover, the use of conventional precoders, such as \gls{mrt}, induces a coherent combining of distortion at the user location, degrading the signal quality. In this work, a linear precoder is proposed that allows working close to saturation while canceling the coherent combining of the third order nonlinear \gls{pa} distortion at the user location. In contrast to other solutions, the \gls{z3ro} precoder does not require prior knowledge of the signal statistics and the \gls{pa} model. The design consists of saturating a single or a few antennas on purpose together with an opposite phase shift to compensate for the distortion of all other antennas. The resulting array gain penalty becomes negligible as the number of base station antennas grows large.
\end{abstract}

\begin{IEEEkeywords}
	Large antenna systems, precoder, nonlinear power amplification.
\end{IEEEkeywords}

\section{Introduction}\label{section:Introduction}

\glsresetall

\subsection{Problem Statement}

The \gls{pa} in wireless communications accounts for a major part of the energy consumption \cite{auer11}, and its operation requires a trade-off between linearity and energy efficiency. On the one hand, the linearity of the \gls{pa} is beneficial for the signal quality. On the other hand, the \gls{pa} has a maximal efficiency when its operating point is close to saturation, where its characteristic is nonlinear and creates distortion\cite{ENZ16}. Making a trade-off between these two aspects has always been a challenging problem\cite{lavr10}. Specifically in large array based transmission, such as massive MIMO in 5G, the combination of multicarrier transmission and precoding over a large number of antennas, leads to a high peak-to-average power ratio, requiring a large linear range of the \gls{pa} \cite{Fager2019}.

More specifically, in large array systems, \textit{i.e.}, one of the key technology enabler of 5G, authors have shown that the \gls{pa} nonlinear distortion is in general not uniformly radiated \cite{Larsson2018,Mollen2018TWC}. It is known that, when the \gls{bs} transmits in a dominant beamforming direction, the distortion is affected by the same array gain as the linearly amplified signal. This implies that nonlinear distortion can strongly limit the user performance, more specifically its \gls{sdr}, when working close to saturation and using conventional precoders.

\subsection{State-of-the-Art}

A first solution to improve the \gls{sdr} is to back-off power, \textit{i.e.}, to reduce the transmit power to let the \glspl{pa} work further away from saturation and thus in a more linear regime. However, this strongly degrades the \gls{pa} efficiency and reduces the transmit power, which can be problematic for users experiencing a large attenuation.

Another popular solution is to use \gls{dpd} compensation techniques \cite{cripps2006rf}. The idea is to pre-distort the signal that is at the input of the \gls{pa} to compensate for its nonlinear distortion. These techniques are data dependent and often require a feedback. Applying these techniques requires a certain complexity burden and power consumption, especially in large antenna systems, where it has to be implemented for each \gls{pa} \cite{Fager2019}. Moreover, even a perfect \gls{dpd} only linearizes input signal with an amplitude lower than the saturation level of the \gls{pa} (weakly nonlinear effects). Higher fluctuations are clipped due to saturation, resulting in nonlinear distortion (strongly nonlinear effects). This phenomenon is more likely as the back-off power is reduced, especially for high \gls{papr} signals such as multicarrier and massive \gls{mimo} systems.

As opposed to previous techniques, we propose a novel linear precoder, which compensates nonlinear distortion and can be implemented in addition to or instead of a \gls{dpd}. A recent work~\cite{aghdam2020distortion} has followed a similar approach. However, no simple closed-form solution for the precoder was presented. The contributions are properly formalized in the next subsection.

%\TODO{I did not put a lot of reference, to be seen if required...}
%\TODO{for later: add clipping methods of Bjorn, but more complex}

\subsection{Contributions}

The limitations of the \gls{mrt} precoder are put forward. This precoder is obtained by maximizing the \gls{snr}, without taking into account nonlinear distortion. As the \glspl{pa} enter saturation, distortion appears and \gls{mrt} is not optimal anymore. It even induces a coherent combining of distortion at the user locations. 

To address the shortcomings of the \gls{mrt} precoder, the \gls{z3ro} precoder is proposed. It is obtained by maximizing the \gls{snr} under the constraint that the third-order distortion is null at the user location. The precoder has, as \gls{mrt}, a low complexity. Moreover, it does not depend on the \gls{pa} parameters and the transmit signal statistics, which makes it simple to compute and to implement. It is obtained by saturating a given subset of antennas and applying an opposite phase shift to them. While using a single saturated antenna is optimal in terms of \gls{snr}, using a larger number of saturated antennas results in a more spatially focused distortion pattern, which is useful regarding unintended locations and total radiated distortion. Moreover, the \gls{snr} penalty of this precoder as compared to \gls{mrt} becomes negligible in the large antenna case. All code used to generate figures are in open-source at \url{github.com/DRAMCO/z3ro-precoder-single-user}.

{\textbf{Notations}}: Superscript $(.)^*$ stands for conjugate operator. $\jmath$ is the imaginary unit. The symbols $\mathbb{E}(.)$, $\angle(.)$, $\Im(.)$ and $\Re(.)$ denote the expectation, phase, imaginary and real parts, respectively.

%Vectors and matrices are denoted by bold lowercase and uppercase letters, respectively. Non bold upper case letter refers to a random variable. Superscripts $^*$, $^T$ and $^H$ stand for conjugate, transpose and Hermitian transpose operators. The symbols $\tr$, $\mathbb{E}(.)$, $\Im(.)$ and $\Re(.)$ denote the trace, expectation, imaginary and real parts, respectively. $\jmath$ is the imaginary unit. The norm $\|\mat{A}\|$ is the Frobenius norm, \textit{i.e.}, $\|\mat{A}\|=\sqrt{\tr(\mat{A}^H\mat{A})}$. $|\mat{A}|$ is the determinant of matrix $\mat{A}$. $\mat{I}_M$ denotes the identity matrix of order $N$. $\delta_n$ and $\delta(t)$ are the Kronecker and Dirac deltas. The letters $e$ and $\gamma$ refer to the Euler number and the Euler-Mascheroni constant respectively. %The $\diag(.)$ operator applied to a vector returns a diagonal matrix whose $k-$th diagonal entry is equal to the $k-$th entry of the argument vector. %The $\diag(.)$ operator applied to a square matrix returns the same matrix with off-diagonal elements set to zero. Operator $\ceil{.}$ is the ceiling function. %$\otimes$ stands for the Kronecker product and 

% no vectors, 

\section{System Model}\label{section:transmission_model}

% This section describes the signal model, the \gls{pa} model and the channel model considered in this work.

\subsection{Signal Model}

% \begin{figure}[t!]
% 	\centering 
% 	{
% 		\resizebox{0.9\linewidth}{!}{%
% 			{\includegraphics[clip, trim=0cm 14cm 22cm 0cm, scale=1]{Fig/system_model.pdf}} %gauche bas droite haut
% 		}
% 	}
% 	\caption{Transmit signal model: base station with $M$ antennas, single stream, linear precoding and power amplification.}
% 	\label{fig:signal_model} 
% 		\vspace{-1em}
% \end{figure}

We consider a a large array-based system with a single user and single \gls{bs} equipped with $M$ antennas.  The complex symbol intended for the user is denoted by $s$, with variance $p$. The signal $s$ is precoded at transmit antenna $m$ using a precoder coefficient $w_{m}$. The complex baseband representation of the signal before the \gls{pa} of the corresponding antenna is denoted by $x_{m}$ and is given by $x_{m}= w_{m} s$.

%Its passband representation is denoted by $x_{m}^{\mathrm{pb}}(t)$ and is defined as
%\begin{align*}
%	x_{m}^{\mathrm{pb}}(t)=2\Re(x_{m}(t)e^{\jmath \omega_c t})=x_{m}(t)e^{\jmath \omega_c t}+x_{m}^*(t)e^{-\jmath \omega_c t},
%\end{align*}
%where $\omega_c=2\pi f_c$ with $f_c$ being the carrier frequency. The signal $x_{m}^{\mathrm{pb}}(t)$ is fed to a nonlinear \gls{pa}. Finally, the output of the \gls{pa}, defined as $y^{\mathrm{pb}}_{m}(t)$, is fed to antenna $m$ of \gls{bs} $m$. We also define the complex baseband representation of $y^{\mathrm{pb}}_{m}(t)$ as $y_{m}(t)$.

\subsection{Power Amplifier Model}

In the following, all \glspl{pa} are assumed memoryless and to have the same transfer function. For the sake of clarity and without loss of generality, the linear gain of the \gls{pa} is set to one. We only consider the third order nonlinear distortion of the \gls{pa}. This approximation regime is valid as the \gls{pa} enters saturation regime, which creates nonlinear distortion but not enough for higher order terms to provide a significant contribution. Under these assumptions, the \gls{pa} output of antenna $m$ can be written as
\begin{align}
y_{m}= x_{m} +a_3 x_{m}  |x_{m}|^{2}, \label{eq:y_mn_gen}
\end{align}
where the coefficient $a_3$ characterizes the nonlinear characteristic of the \gls{pa}, including both amplitude-to-amplitude modulation (AM/AM) and amplitude-to-phase modulation (AM/PM).

\subsection{Channel Model}\label{subsection:channel_model}

The complex channel gain from antenna $m$ to the user is denoted by $h_{m}$. 
The received signal is given by
\begin{align}
r&=\sum_{m=0}^{M-1} h_{m} y_m + v, \label{eq:r_SU}%\\
%	&= \beta_k \sum_{m=0}^{M-1} e^{-\jmath \phi_{m,k}} x_m + w_m
\end{align}
where $v$ is zero mean circularly symmetric complex Gaussian noise with variance $\sigma_v^2$. In the following, at some places, a pure \gls{los} channel to each user will be considered. In this particular case, $h_{m}$ can be written as
\begin{align}
h_{m}= \sqrt{\beta} e^{-\jmath \phi_{m}} \label{eq:channel_model_LOS}
\end{align}
The real positive coefficient $\beta$ models the path loss. For a narrowband system, the difference of propagation distance between each of the antennas and the user results in an antenna-dependent phase shift $\phi_{m}$, which can be directly related to the antenna location and the angular direction of the user. For a \gls{ula}, the phase shift is given by $\phi_{m}=m\frac{2\pi}{\lambda_c} d\cos(\theta)$, where $\lambda_c$ is the carrier wavelength, $d$ the inter-antenna spacing and $\theta$ is the user angle. 

The radiation pattern in an arbitrary direction $\tilde{\theta}$ can be computed as
\begin{align}
P(\tilde{\theta})&=\mathbb{E}\left(\left|\sum_{m=0}^{M-1}y_me^{-\jmath\tilde{\phi}_m}\right|^2\right)%\nonumber\\
%&=\sum_{m=,m'=0}^{M-1} {R}_{{y}_{m}{y}_{m'}} e^{-\jmath(\tilde{\phi}_m-\tilde{\phi}_{m'})}
,\label{eq:radiation_pattern}
\end{align}
where $\tilde{\phi}_m=m \frac{2\pi}{\lambda_c} d\cos(\tilde{\theta})$. Defining the total transmit power $P_T=\int_{-\pi}^{\pi}P(\tilde{\theta})d\tilde{\theta}$, the array directivity is $D(\tilde{\theta})=\frac{P(\tilde{\theta})}{P_T/2\pi}$, \textit{i.e.}, $P(\tilde{\theta})$ is normalized with respect to an isotropic radiator. 

%In the following, the limitations of the well known \gls{mrt} precoder are first put forward. The \gls{z3ro} precoder is then proposed.

\section{Limitations of the Maximum Ratio Transmission Precoder}
\label{section:SU}

\begin{figure*}[t!]
	\centering 
	\subfloat[MRT, $M=32$.
	]{
		\resizebox{0.4\linewidth}{!}{%
			{\includegraphics[clip, trim=2.5cm 2.6cm 2.5cm 3cm, scale=1]{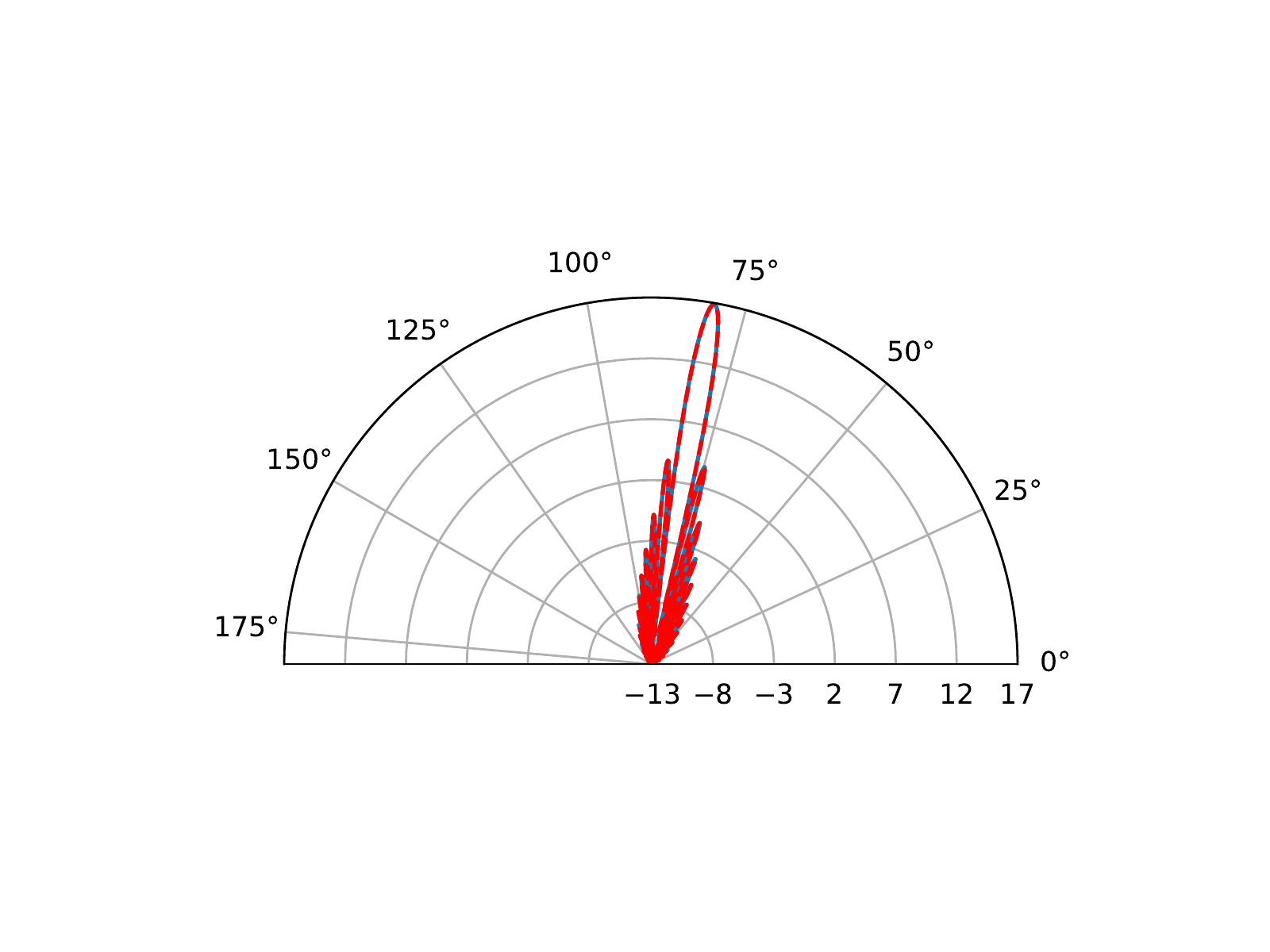}} %gauche bas droite haut
		}
	} %\hspace{-1em}
	\subfloat[\gls{z3ro}, $M=2$.]{
		\resizebox{0.4\linewidth}{!}{%
			{\includegraphics[clip, trim=2.5cm 2.6cm 2.5cm 3cm, scale=1]{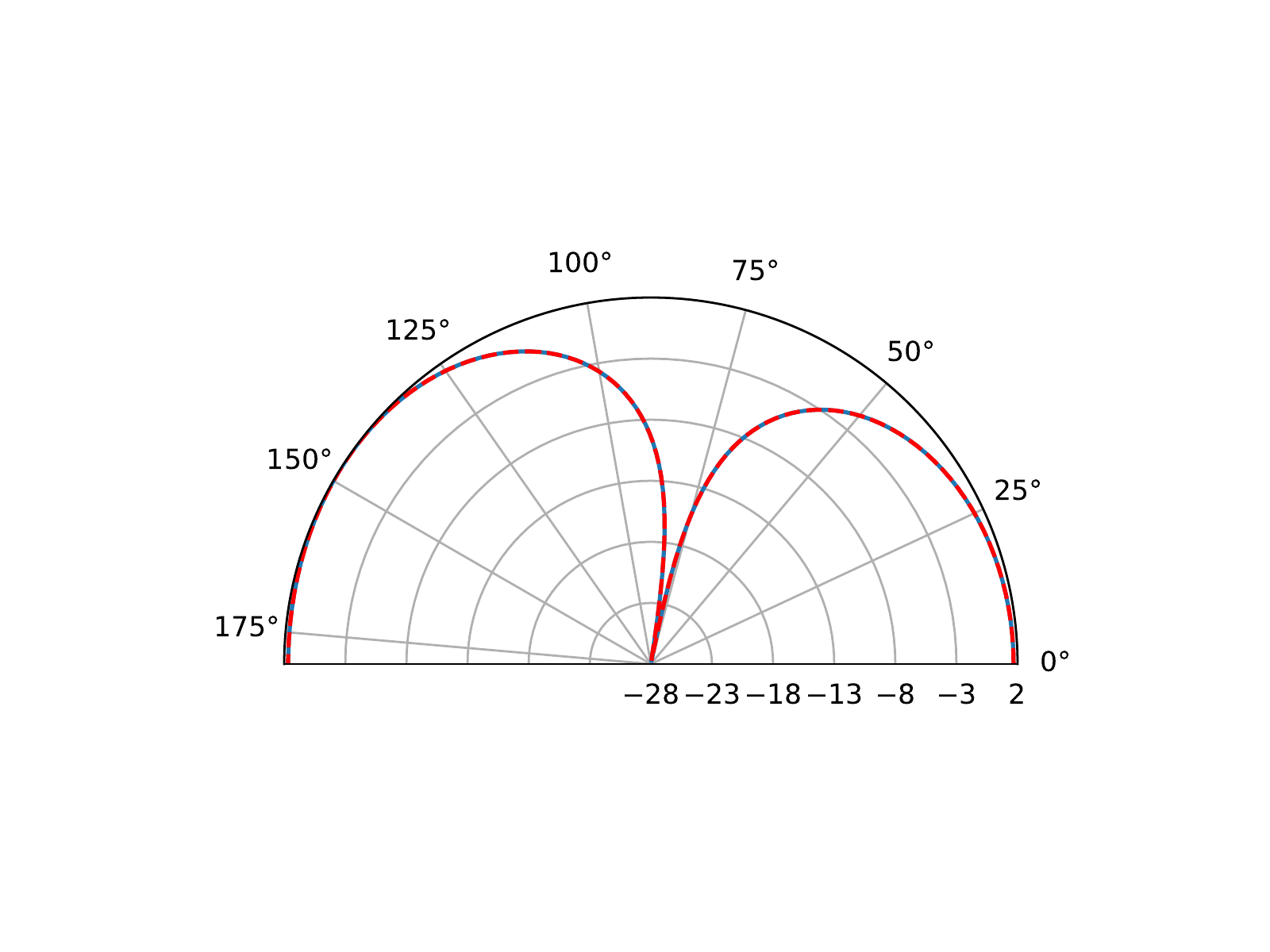}} %gauche bas droite haut
		}
	}\\\vspace{-1em}
	\subfloat[\gls{z3ro}, $M=8$.]{
		\resizebox{0.4\linewidth}{!}{%
			{\includegraphics[clip, trim=2.5cm 2.6cm 2.5cm 3cm, scale=1]{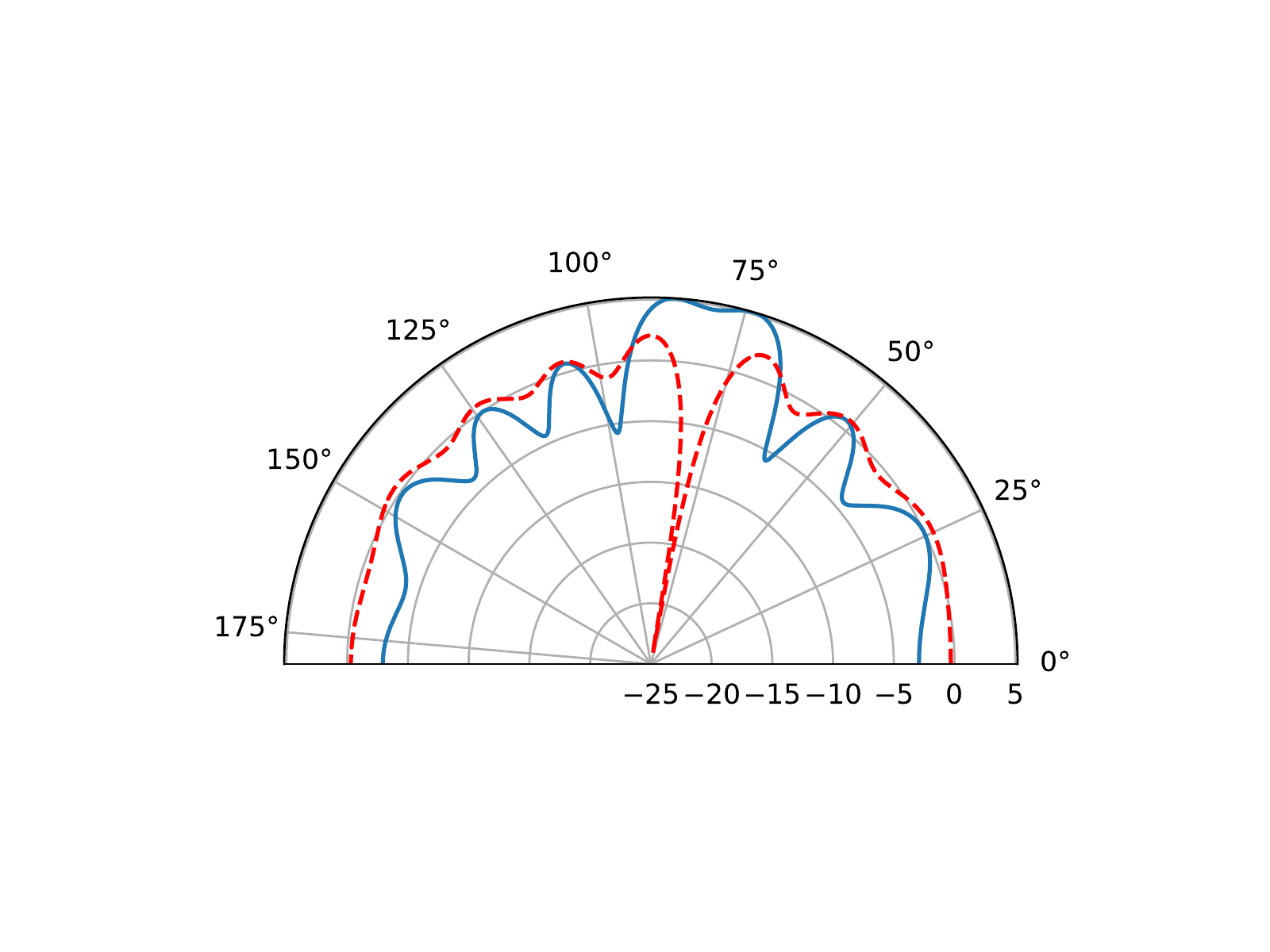}} %gauche bas droite haut
		}
	}
	\subfloat[\gls{z3ro}, $M=32$.]{
		\resizebox{0.4\linewidth}{!}{%
			{\includegraphics[clip, trim=2.5cm 2.6cm 2.5cm 3cm, scale=1]{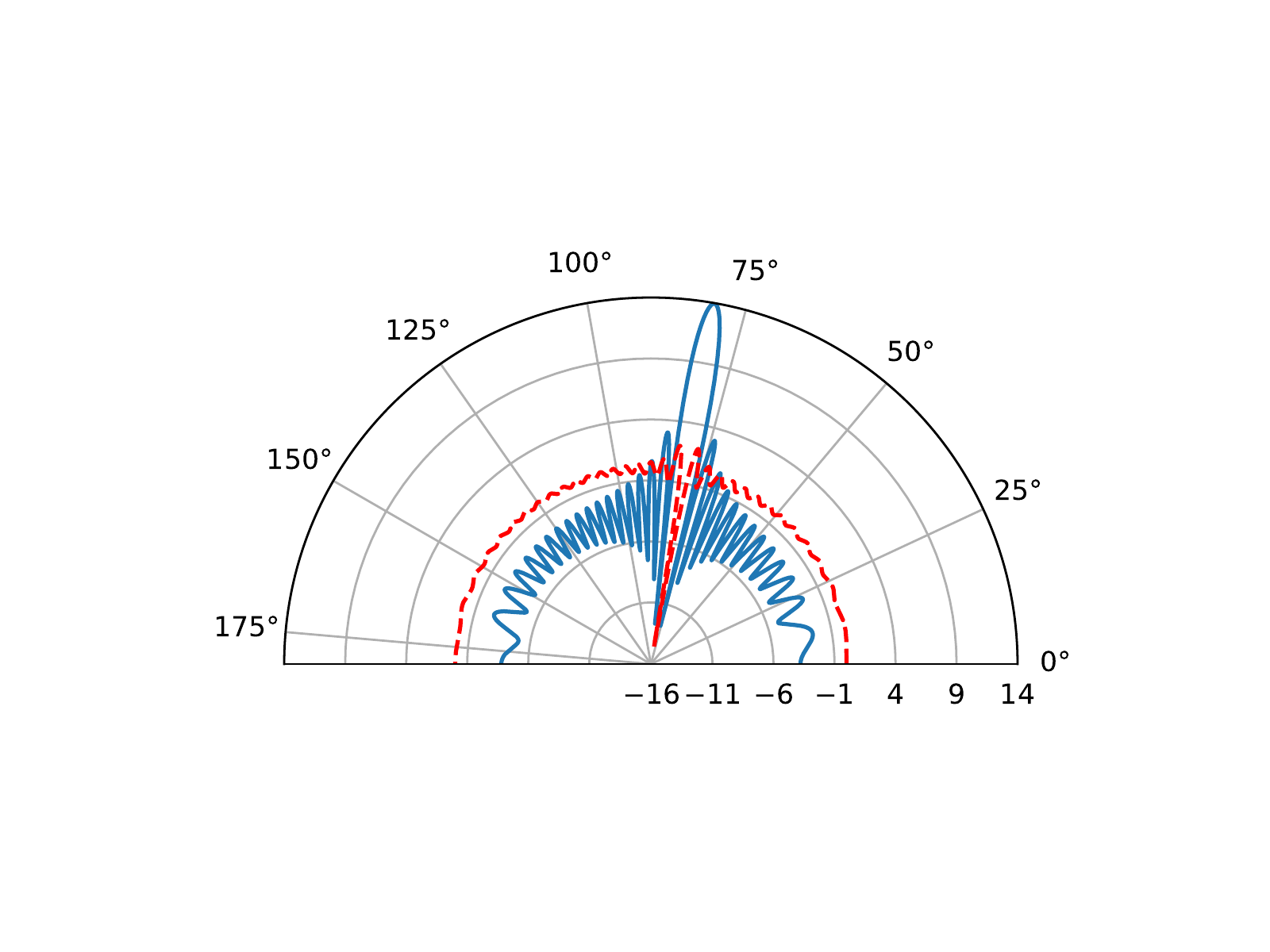}} %gauche bas droite haut
		}
	}
	\caption{Directivity pattern [dB] of the signal (continuous blue) and third-order distortion (dashed red) for a pure \gls{los} channel and half-wavelength \gls{ula}. User angle is $\theta=80^{\circ}$ and $M_s=1$ saturated antenna. For \gls{mrt}, distortion coherently combines in the user direction while it is null for \gls{z3ro}. The array gain penalty is infinite for $M=2$ while it vanishes as $M$ grows large.}
	\label{fig:directivity_patterns} 
		\vspace{-1em}
\end{figure*}

The well known \gls{mrt} precoder is obtained by maximizing the received \gls{snr}, disregarding the nonlinear distortion terms at the output of the \gls{pa}, \textit{i.e.}, approximating $a_3\approx0$. Under this approximation, the output of the \gls{pa} becomes $y_m \approx x_m$ and the received signal becomes
\begin{align*}
r&\approx \sum_{m=0}^{M-1} h_{m} x_m + v=s \sum_{m=0}^{M-1} h_{m} w_m + v.
\end{align*}
The received \gls{snr} is given by
\begin{align}
\text{SNR}=\frac{p}{\sigma_v^2}  \left|\sum_{m=0}^{M-1} h_m w_m\right|^2. \label{eq:SNR_SU}
\end{align}
The \gls{mrt} precoder is found by optimizing the \gls{snr} under a total transmit power constraint
\begin{align*}
\max_{w_0,...,w_{M-1}} \text{SNR} \quad \text{s.t.}\quad p\sum_{m=0}^{M-1}|w_m|^2=pM.
\end{align*}
The problem can be solved using, \textit{e.g.}, the Lagrangian multipliers technique, giving the \gls{mrt} precoder
\begin{align*}
w_m^{\mathrm{MRT}}&=\alpha h_m^*,\ \alpha =\sqrt{\frac{M}{\sum_{m'=0}^{M-1}|h_{m'}|^2}},\\ \text{SNR}_{\mathrm{MRT}}&= pM \frac{\sum_{m'=0}^{M-1}|h_{m'}|^2}{\sigma_v^2},
\end{align*}
where $\alpha $ is a normalization constant\footnote{Note that the solution is equivalent up to a constant phasor applied to each antenna element. It is set to one here for simplicity.}. Assuming a unit per-antenna channel variance $\mathbb{E}(|h_{m'}|^2)=1$, the \gls{mrt} achieves an array gain of a factor $M$ with respect to the transmit power $pM$. The \gls{mrt} precoder is optimal as long as the \gls{pa} works in its linear regime. As $p$ increases, nonlinear terms will be amplified and distortion becomes non-negligible. The \gls{pa} output $y_m$ given in (\ref{eq:y_mn_gen}) can be evaluated for $x_m=w_m^{\mathrm{MRT}} s$
\begin{align*}
y_{m}&= s \alpha h_{m}^*   + a_3 s  |s|^{2} \alpha ^3 h_{m}^*  |h_{m}|^{2} 
\end{align*}
and the received signal (\ref{eq:r_SU}) becomes
\begin{align*}
r= s  \alpha \sum_{m=0}^{M-1}|h_{m}|^{2} +a_3s  |s|^{2}  \alpha ^3 \sum_{m=0}^{M-1}|h_{m}|^{4} + v,
\end{align*}
where it can be seen that the channel coherently combines both the linear term and the nonlinear term, \textit{i.e.}, the phases are matched. As a result, distortion coherently adds up at the user location and becomes the limiting factor at high power.

To clarify this, let us consider a pure \gls{los} channel (\ref{eq:channel_model_LOS}) giving $h_m=\sqrt{\beta} e^{-\jmath \phi_m}$ and $\alpha =1/\sqrt{\beta}$. Then,
\begin{align*}
y_{m}&= e^{\jmath \phi_m} (s    + a_3 s  |s|^{2}) \\
r&= \sqrt{\beta} M (s+ a_3s  |s|^{2})+ v,
\end{align*}
where it is clear that the array gain $M$ affects both linear and nonlinear terms. Moreover, one can see that both the linear and nonlinear terms are beamformed in the same direction, as they are both affected by the term $e^{\jmath \phi_m}$. The radiation pattern can be evaluated using (\ref{eq:radiation_pattern}). An example of the directivity pattern of linear/nonlinear terms for \gls{mrt} precoding and a ULA is shown in Fig.~\ref{fig:directivity_patterns}~(a). It illustrates that the directivity pattern of both linear and nonlinear terms is exactly the same.

%\TODO{need to include pattern analytical form and development? space constraints...}

\section{Zero Third-Order Distortion Precoder}
\label{subsection:zero_third_order_distortion}

The \gls{mrt} precoder induces a coherent combining of distortion at the user location, as demonstrated in the previous section. As $p$ increases, the \glspl{pa} will be more saturated and the user performance will be limited by its \gls{sdr}. Since linear and nonlinear terms are affected by the same array gain, increasing the number of antennas does not help to improve the \gls{sdr}. This section presents the \gls{z3ro} precoder which is designed to maximize the linear array gain to boost the \gls{snr} at the user location while canceling the third order distortion.

The received signal at the user location for a general linear precoder $w_m$ is
\begin{align*}
r &= \sum_{m=0}^{M-1} h_m  x_m+a_3 \sum_{m=0}^{M-1} h_m x_m |x_m|^2 + v\\
&= s \sum_{m=0}^{M-1} h_m w_m +a_3 s |s|^2 \sum_{m=0}^{M-1} h_m w_m |w_m|^2 + v.
\end{align*}
The distortion term can be forced to zero by ensuring that
\begin{align}
\sum_{m=0}^{M-1} h_m w_m |w_m|^2&=0. \label{eq:zero_distortion_constraint}
\end{align}
Note that this constraint does not depend on the statistics of the transmit symbols $s$ and the \gls{pa} parameter $a_3$, which makes it practical to implement. A similar condition was obtained in \cite{aghdam2020distortion}. However, the authors made the pessimistic conclusion that considering this constraint leads to a considerable reduction of array gain. Indeed, take the two antenna case $M=2$ and a \gls{los} channel $h_m=\sqrt{\beta} e^{-\jmath \phi_m}$. If the user angle is coming from broadside, it implies that $\phi_0=\phi_1=0$ and the constraint (\ref{eq:zero_distortion_constraint}) implies that
\begin{align*}
w_0|w_0|^2&=-w_1|w_1|^2 \\
|w_0|^3 e^{\jmath \angle w_0}&=-|w_1|^3 e^{\jmath \angle w_1},
\end{align*}
which implies that $w_0$ and $w_1$ should have the same magnitude but opposite phase, \textit{e.g.}, $w_0=-w_1$, which leads to a zero array gain, \textit{i.e.}, $|w_0+w_1|^2=0$. The same result occurs for any user angle, as depicted in Fig.~\ref{fig:directivity_patterns}~(b). However, it is shown further that, with the proposed \gls{z3ro} design, as the number of antennas $M$ grows large, the loss in array gain becomes negligible, as depicted in Fig.~\ref{fig:directivity_patterns}~(c) and (d). In the following, we first derive the expression of the \gls{z3ro} precoder in the \gls{los} case, before extending it to general channels.

\subsection{Line-of-Sight Channel}

For a \gls{los} channel $h_m=\sqrt{\beta} e^{-\jmath \phi_m}$, under constraint (\ref{eq:zero_distortion_constraint}), the precoder optimization problem can be formulated as
\begin{align}
\max_{w_0,...,w_{M-1}} \text{SNR}=\frac{\beta p}{\sigma_v^2}  \left|\sum_{m=0}^{M-1} e^{-\jmath \phi_m} w_m\right|^2, \label{eq:3rd_problem}
\end{align}
under the two constraints
\begin{align}
\text{Transmit power: }&p\sum_{m=0}^{M-1}|w_m|^2=pM, \label{eq:transmit_power_constraint}\\  
\text{Zero third-order distortion: }&\sum_{m=0}^{M-1} e^{-\jmath \phi_m} w_m |w_m|^2=0.
\end{align}
Note that (\ref{eq:transmit_power_constraint}) is not exactly a transmit power constraint, but rather the total power at the input of the \glspl{pa}. Indeed, the nonlinear transmited power is disregarded for simplicity and for the sake of comparison with the \gls{mrt} precoder, which is found under a similar constraint.

The above problem is non convex and not trivial to solve. However, it can be first reformulated in a simpler form using the change of variable $g_m=w_me^{-\jmath \phi_m}$. Then, without loss of generality, the optimization problem can be solved as a function of $g_m$ instead of $w_m$. The reformulated problem becomes
\begin{align*}
\max_{g_0,...,g_{M-1}} \left|\sum_{m=0}^{M-1} g_m\right|^2\ \text{s.t.}\ \sum_{m=0}^{M-1} |g_m|^2=M,\ \sum_{m=0}^{M-1} g_m |g_m|^2=0.
\end{align*}
For an optimal solution $g_m$, the corresponding $w_m$ is given by $w_m=g_me^{\jmath \phi_m}$. Moreover, from the above formulation, a conjecture can be made: an optimal $g_m$ should be purely real up to a constant phasor.\footnote{Unfortunately, we could not demonstrate this conjecture yet even though it was found to be valid for extensive numerical simulations.} If this phasor is set to one, the problem is converted to an all real problem
\begin{align}
\max_{g_0,...,g_{M-1}} \left(\sum_{m=0}^{M-1} g_m\right)^2\ \text{s.t.}\ \sum_{m=0}^{M-1} g_m^2=M,\ \sum_{m=0}^{M-1} g_m^3=0. \label{eq:all_real_3rd_problem}
\end{align}
The following theorem solves the problem (\ref{eq:all_real_3rd_problem}) and gives the expression for the optimum precoder $w_m=g_me^{\jmath \phi_m}$.

\begin{theorem}\label{theorem:3rd_order} Critical points of problem (\ref{eq:all_real_3rd_problem}), providing a non zero array gain, are obtained by using a number of antennas $M_s$ with an opposite phase shift and saturated in such a way that they compensate for the distortion due to all other antenna elements. This leads to the \gls{z3ro} precoder design:
	\begin{align*}
	w_m^{\mathrm{Z3RO},M_s}&=\alpha e^{\jmath \phi_m}\begin{cases}
	- \left(\frac{M-M_s}{M_s}\right)^{1/3}\ &\text{if}\ m=0,...,M_s-1\\
	1\ &\text{otherwise}
	\end{cases},
	\end{align*}
	where $\alpha={\sqrt{M}}/{\sqrt{M-M_s+M_s^{1/3}(M-M_s)^{2/3}}}$ is a power normalization constant. The $M_s$ elements can be chosen arbitrarily, with $M/2>M_s>0$. The global maximum of (\ref{eq:all_real_3rd_problem}) is obtained by using a single saturated antenna, \textit{i.e.} $M_s=1$. The received \gls{snr} is
	\begin{align}
	\text{SNR}&=\frac{\beta p}{\sigma_v^2}M\frac{((M-M_s)^{2/3}-M_s^{2/3})^2}{(M-M_s)^{1/3}+M_s^{1/3}}. \label{eq:SNR_Z3RO}
	\end{align}
	For a fixed $M_s$, $\lim_{M\rightarrow +\infty} \text{SNR}/\text{SNR}_{\mathrm{MRT}}= 1$, implying that the reduction of array gain becomes negligible while the third distortion order is null.
	
\end{theorem}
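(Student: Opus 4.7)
The plan is to attack problem~(\ref{eq:all_real_3rd_problem}) directly by Lagrange multipliers. Introducing multipliers $\mu,\nu$ for the power and zero third-order constraints, the Lagrangian $\mathcal{L}=(\sum_m g_m)^2-\mu(\sum_m g_m^2-M)-\nu\sum_m g_m^3$ yields, upon setting $\partial\mathcal{L}/\partial g_m=0$, the coordinate-wise condition $3\nu g_m^2+2\mu g_m-2S=0$ with $S=\sum_{m'}g_{m'}$. The key observation is that this is the \emph{same} quadratic equation in every $g_m$, so every critical point must concentrate the coordinates on at most two distinct values. I would parametrize them as $a$ attained by $M_s$ antennas and $b$ attained by the remaining $M-M_s$; by the symmetry of the problem one can assume these are the first $M_s$ indices, which immediately explains the arbitrariness of the choice of saturated elements.

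Inserting the two-value ansatz into the two equality constraints reduces the problem to a small algebraic system. The cubic constraint $M_s a^3+(M-M_s)b^3=0$ forces $a=-b\bigl((M-M_s)/M_s\bigr)^{1/3}$, after discarding the degenerate $\nu=0$ branch (which collapses to $g_m$ constant and violates the cubic constraint unless $g_m\equiv 0$). The power constraint then pins $b=\alpha$ with the $\alpha$ stated in the theorem, and undoing the change of variable $w_m=g_m e^{\jmath\phi_m}$ recovers the precoder formula. Finally, evaluating $S^2=[M_s a+(M-M_s)b]^2$ and factoring $(M-M_s)^{2/3}$ between numerator and denominator delivers exactly the closed-form $\text{SNR}$ in~(\ref{eq:SNR_Z3RO}).

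To isolate the global maximum over the integer parameter $M_s$, I would show that the ratio in~(\ref{eq:SNR_Z3RO}) is strictly decreasing on $0<M_s<M/2$: the numerator $((M-M_s)^{2/3}-M_s^{2/3})^2$ is decreasing there (its positive square root is the difference of a decreasing and an increasing function of $M_s$) while the denominator $(M-M_s)^{1/3}+M_s^{1/3}$ is increasing, so $M_s=1$ is optimal; the bound $M_s<M/2$ comes from the swap symmetry $a\leftrightarrow b$, which would otherwise double-count solutions. The large-$M$ asymptotics then follow by expanding $((M-M_s)^{2/3}-M_s^{2/3})^2\sim M^{4/3}$ and $(M-M_s)^{1/3}+M_s^{1/3}\sim M^{1/3}$ at fixed $M_s$, giving $\text{SNR}\sim \beta p M^2/\sigma_v^2=\text{SNR}_{\mathrm{MRT}}$.

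The main obstacle will be cleanly justifying the two-value reduction: although the stationarity relation is an $m$-independent quadratic, one must carefully handle the degenerate multiplier configurations ($\nu=0$, or $S=0$ which yields zero array gain and is explicitly excluded by the theorem) to guarantee that the ansatz truly exhausts the non-trivial critical points. A secondary caveat, already flagged by the authors, is that the whole derivation rests on the conjecture that an optimal $g_m$ can be taken real up to a global phase; removing that assumption would require repeating the Lagrangian analysis in complex variables and arguing that the extra degrees of freedom cannot improve $S^2$.
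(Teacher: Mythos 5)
Your proposal follows essentially the same route as the paper's appendix: form the Lagrangian for (\ref{eq:all_real_3rd_problem}), observe that stationarity gives an $m$-independent quadratic so each $g_m$ takes one of two values, solve the two constraints for those values with $M_s$ antennas on the negative branch, and conclude $M_s=1$ is optimal because the resulting gain is monotonically decreasing in $M_s/M$ on $]0,1/2[$. Your treatment is if anything slightly more explicit than the paper's (you justify the monotonicity via the decreasing numerator and increasing denominator, and flag the degenerate $\nu=0$ and $S=0$ branches, which the paper passes over), so no substantive difference to report.
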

\begin{proof}
	See appendix.
\end{proof}

The theorem provides a precoder which is simple to compute and to implement. Its complexity is approximately equal to the one of the \gls{mrt} precoder. As explained earlier, it does not require any knowledge about the \gls{pa} or the statistics of the transmit signal. The global optimum is obtained by using a single saturated antenna. However, other critical points provide useful trade-offs, as explained further below.

\begin{figure}[t!]
	\centering 
	\resizebox{0.9\linewidth}{!}{%
		{\input{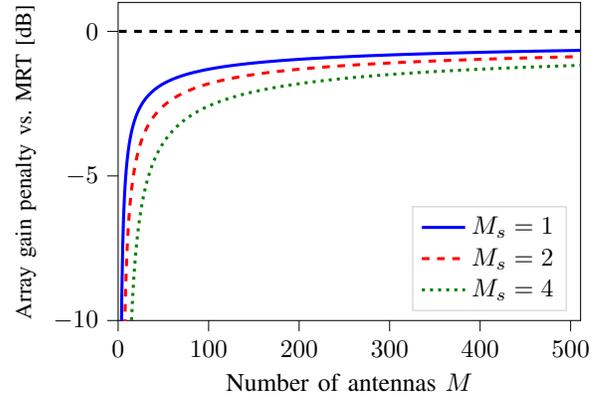}}
	} 
	\caption{As the number of antennas increases, the penalty in array gain of the \gls{z3ro} precoder vanishes, as compared to \gls{mrt}. $M_s$ is the number of saturated antennas.}
	\label{fig:array_gain} 
	%	\vspace{-1em}
\end{figure}

\textbf{Array gain}: The \gls{z3ro} precoder is proposed for large array systems, operating in the saturation regime. In this regime, the \gls{sdr} is limiting rather than the \gls{snr}, implying that the reduction of array gain with respect to \gls{mrt} becomes negligible. Moreover, as shown in Fig.~\ref{fig:array_gain}, this array gain penalty vanishes for large array systems (as $M$ grows large). As an example, for $M_s=1$ and $M=64$, the \gls{mrt} precoder achieves an array gain of about 18 dB versus 16.5 dB for the proposed precoder, while the third distortion order is completely canceled. 

%\TODO{for later: study convergence rate as a function of $M$ and $M_s$?}

\begin{figure}[t!]
	\centering 
	\subfloat[Signal radiation pattern (dB), $M=32$.]{
		\resizebox{0.8\linewidth}{!}{%
			%			{\includegraphics[scale=1]{radiatioM_sig_M_32.png}} %gauche bas droite haut
			{\includegraphics[clip, trim=2.5cm 2cm 2.5cm 3cm, scale=1]{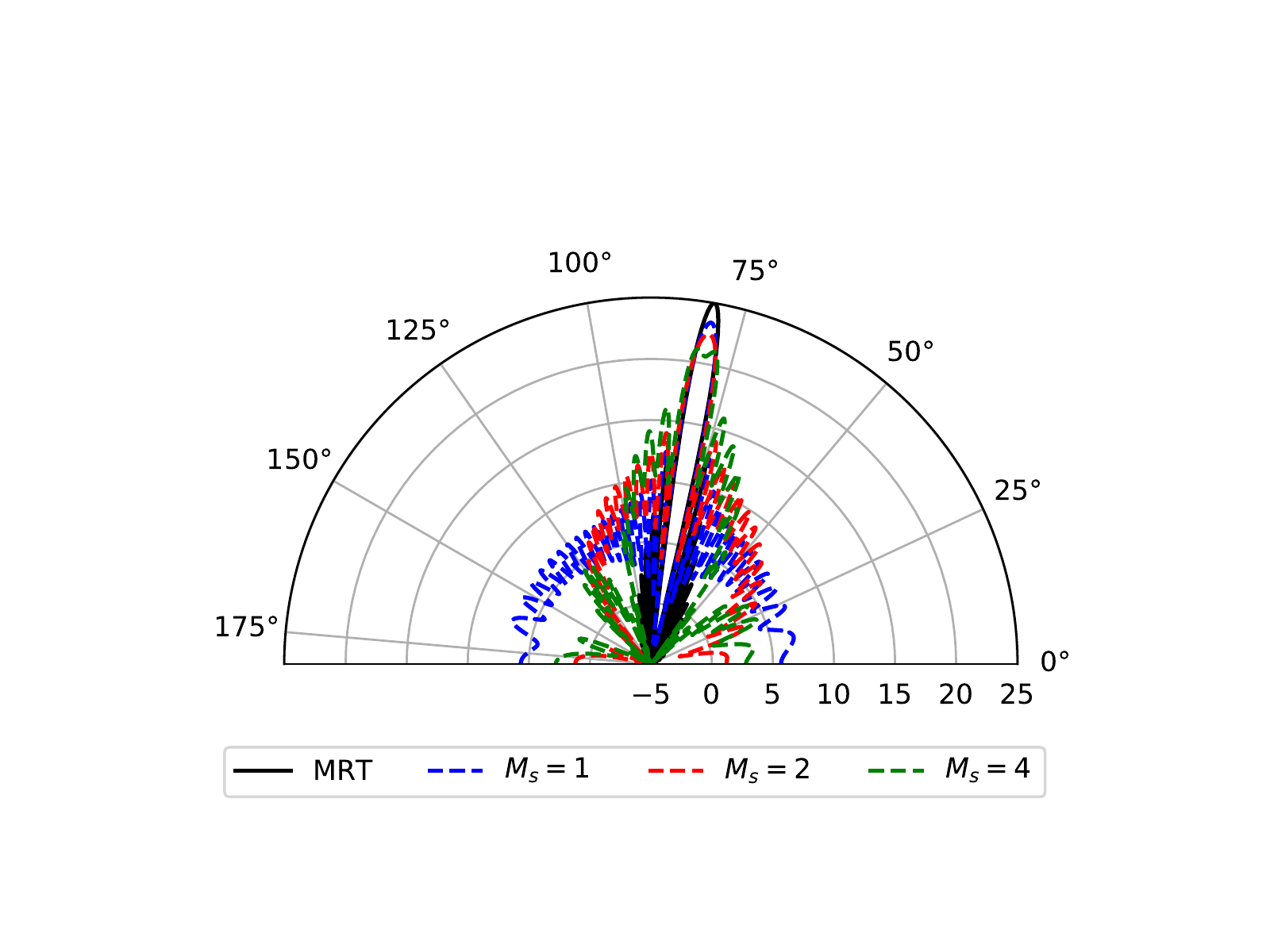}}
		}
	}
	
	\subfloat[Third-order distortion radiation pattern (dB), $M=32$.
	]{
		\resizebox{0.8\linewidth}{!}{%
			%			{\includegraphics[scale=1]{radiation_dist_M_32.png}} %gauche bas droite haut
			{\includegraphics[clip, trim=2.5cm 2cm 2.5cm 3cm, scale=1]{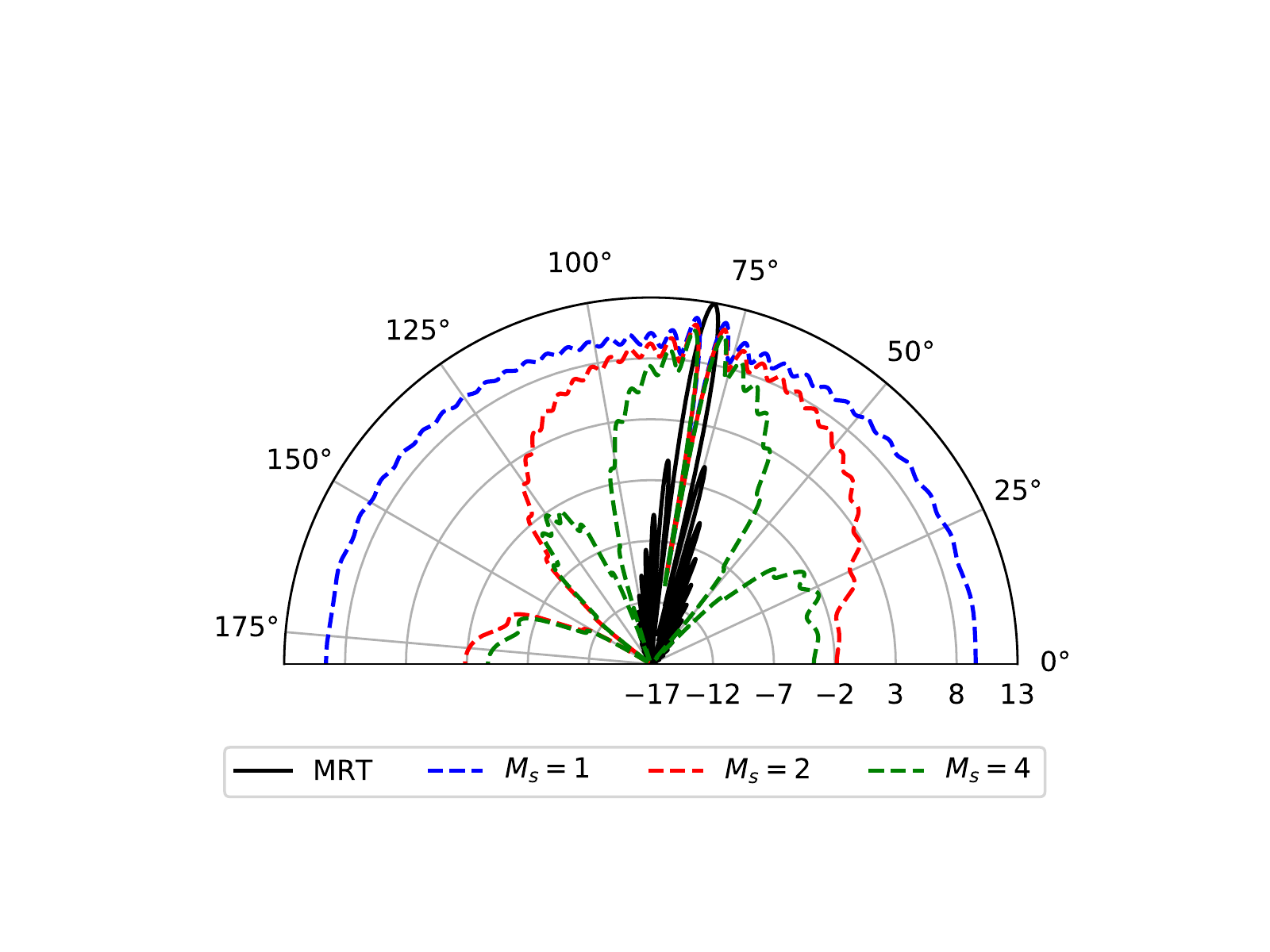}}
		}
	} 
	
	\caption{Radiation pattern for \gls{los} channel and half-wavelength \gls{ula}. As more antennas become saturated ($M_s\nearrow$), the array gain of the \gls{z3ro} precoder decreases. On the other hand, the total radiated power decreases and it becomes more spatially focused, which is beneficial regarding unintended directions.}
	\label{fig:radiation_patterns} 
		\vspace{-1em}
\end{figure}

\textbf{Distortion radiation pattern}: While Fig.~\ref{fig:directivity_patterns} shows directivity patterns of the signal and distortion for the \gls{mrt} and the proposed precoders, Fig.~\ref{fig:radiation_patterns} shows their absolute radiation pattern for different values of $M_s$. In Fig.~\ref{fig:radiation_patterns}~(a), in accordance with the previous discussions, the array gain decreases when $M_s$ increases. This would imply only using the design $M_s=1$. However, as shown in Fig.~\ref{fig:radiation_patterns}~(b), this design, as compared to \gls{mrt}, leads to an increase of total distortion power, which is mainly radiated towards unintended locations. This leads to interference for potential observers, especially since \gls{pa} nonlinearities induce out-of-band emissions. Hence, a user in an adjacent band could suffer from it. On the other hand, as $M_s$ increases, the distortion becomes focused ``approximately in the user direction, except for the exact user direction, where it is null by design''. Moreover, the total radiated distortion power is reduced. This comes from the fact that distortion is beamformed and benefits from an array gain. As a result, choosing the value $M_s$ offers a trade-off between array gain and spatial focusing of the distortion.

\subsection{General Channel}

To extend the precoder design to a general channel $h_m$, a heuristic design is proposed, directly inspired by the previous one, and satisfying the zero third-order distortion constraint (\ref{eq:zero_distortion_constraint}). 
\begin{proposition}
	An extension of the \gls{z3ro} precoder to a general channel is obtained as
	\begin{align*}
	w_m^{\mathrm{Z3RO},M_s}	&=\alpha h_m^*\begin{cases}
	-\gamma &\text{if}\ m=0,...,M_s-1\\
	1 &\text{otherwise}
	\end{cases},
	\end{align*}
	where $\gamma$ is the additional gain of the saturated antennas
	\begin{align*}
	\gamma&=\left(\frac{\sum_{m'=M_s}^{M-1} |h_{m'}|^4}{\sum_{m''=0}^{M_s-1} |h_{m''}|^4}\right)^{1/3}
	\end{align*}
	and $\alpha$ is a power normalization constraint given by
	\begin{align*}
	\alpha=\frac{\sqrt{M}}{\sqrt{\sum_{m'=M_s}^{M-1} |h_{m'}|^2+\gamma^2\sum_{m=0}^{M_s-1} |h_{m}|^2}}.
	\end{align*}	
	If the channel gains $h_m$ are assumed to be independent and identically distributed (i.i.d.) and so that $\mathbb{E}(|h_m|^2)=\beta$, as $M$ and $M_s$ grow large, the \gls{z3ro} precoder asymptotically achieves the same \gls{snr} performance as in Theorem~\ref{theorem:3rd_order}.
	
\end{proposition}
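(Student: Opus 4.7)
The plan is to verify the two constraints satisfied by the proposed $w_m^{\mathrm{Z3RO},M_s}$ by direct substitution, and then invoke the strong law of large numbers to pass to the asymptotic LOS-like SNR expression.

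First I would substitute the precoder into the zero-distortion condition (\ref{eq:zero_distortion_constraint}). Splitting the sum between the $M_s$ saturated antennas and the remaining $M-M_s$, and using $h_m w_m = -\alpha\gamma|h_m|^2$ for $m < M_s$ and $h_m w_m = \alpha|h_m|^2$ otherwise, the constraint reduces to
\begin{align*}
-\alpha^3 \gamma^3 \sum_{m=0}^{M_s-1}|h_m|^4 \;+\; \alpha^3 \sum_{m=M_s}^{M-1}|h_m|^4 \;=\; 0,
\end{align*}
which is exactly the definition of $\gamma$ stated in the proposition. The same split applied to the input-power constraint $\sum_m |w_m|^2 = M$ pins down $\alpha$ as claimed.

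Next, I would plug $w_m^{\mathrm{Z3RO},M_s}$ into the SNR expression (\ref{eq:SNR_SU}). Since $\sum_m h_m w_m = \alpha\bigl(\sum_{m=M_s}^{M-1}|h_m|^2 - \gamma \sum_{m=0}^{M_s-1}|h_m|^2\bigr)$ is already real-valued, this gives a clean closed-form SNR as a function of only the channel magnitudes, namely
\begin{align*}
\text{SNR} = \frac{p \alpha^2}{\sigma_v^2}\left(\sum_{m=M_s}^{M-1}|h_m|^2 - \gamma \sum_{m=0}^{M_s-1}|h_m|^2\right)^2.
\end{align*}

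Finally, I would apply the strong law of large numbers to the four sums $\sum_{m=0}^{M_s-1}|h_m|^{2k}$ and $\sum_{m=M_s}^{M-1}|h_m|^{2k}$ for $k=1,2$. As $M_s$ and $M-M_s$ both grow large, each sum concentrates around its mean, e.g.\ $\sum_{m=0}^{M_s-1}|h_m|^2 \sim M_s \beta$ and similarly for the other partial sums with $\mathbb{E}(|h_m|^4)$. The key simplification is that in the ratio defining $\gamma^3$ the factor $\mathbb{E}(|h_m|^4)$ cancels, so $\gamma \to ((M-M_s)/M_s)^{1/3}$, exactly as in the deterministic LOS case. Substituting these deterministic limits into $\alpha^2$ and into the combining factor, then factoring $\beta(M-M_s)^{2/3}((M-M_s)^{1/3}+M_s^{1/3})$ out of $\alpha^{-2}$ and $(M-M_s)^{1/3}((M-M_s)^{2/3}-M_s^{2/3})$ out of the bracket, recovers (\ref{eq:SNR_Z3RO}) up to the path-loss factor $\beta$. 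The main subtlety is not algebraic but the implicit requirement that $M_s \to \infty$ in addition to $M \to \infty$, since concentration of the saturated-antenna subsums is what makes the fourth-moment cancellation in $\gamma$ rigorous; if only $M - M_s \to \infty$ while $M_s$ stays finite, the random fluctuations of $\sum_{m=0}^{M_s-1}|h_m|^4$ would survive and the LOS SNR would not be reached exactly.
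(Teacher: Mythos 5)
Your proposal is correct and follows the same route as the paper's own (much terser) proof: direct substitution to verify the zero-distortion and power constraints, followed by the law of large numbers applied to the partial sums of $|h_m|^2$ and $|h_m|^4$ so that $\gamma\to((M-M_s)/M_s)^{1/3}$ and the SNR reduces to (\ref{eq:SNR_Z3RO}). Your explicit remark that $M_s\to\infty$ is needed for the fourth-moment cancellation in $\gamma$ is a detail the paper only states implicitly, and your algebra checks out.
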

\begin{proof}
	One can check that the zero third-order distortion constraint (\ref{eq:zero_distortion_constraint}) is satisfied for the proposed precoder. To show that the precoder achieves the same \gls{snr} as given in (\ref{eq:SNR_Z3RO}), one can use the law of large number to show that, as $M\rightarrow +\infty$ and $M_s\rightarrow +\infty$, each sample average of i.i.d. elements converges to their expectation. %\TODO{for later: be more rigorous and complete proof, double check analytically}
\end{proof}

The \gls{z3ro} precoder proposed in Theorem~\ref{theorem:3rd_order} is found as a special case of a \gls{los} channel. Hence, the same notation $w_m^{\mathrm{Z3RO},M_s}$ was used. Furthermore, it shows that, for i.i.d. channels and a similar averaged channel power, the same \gls{snr} performance as in \gls{los} can be found, provided that the number of antennas $M$ and saturated antennas $M_s$ is large enough. An additional advantage of using a larger $M_s$ here is that it provides some diversity. This helps to prevent the case where saturated antennas would have a small channel gain. Indeed, a pathological case would be to use a single saturated antenna with low channel gain $|h_m|$, so that $\gamma$ would be very large and would consume more power. Using more saturated antennas allows to avoid these fades.

%As a perspective, one could optimize the antennas which are being saturated. However, this does not seem trivial. Using antennas with a large channel gains can be beneficial to reduce $\gamma$ but at the same time, this implies not using the best channels for achieving array gain and transmitting information.\TODO{keep this comment?}

\section{Simulation Results \& Discussion}
\label{section:simulation_results}

Let us compare the \gls{mrt} and the \gls{z3ro} precoders. The Bussgang theorem \cite{bussgang1952crosscorrelation,demir2020bussgang} implies that the received signal can be decomposed as $r=G s + d + v$, where $d$ is the nonlinear distortion, which is uncorrelated with the transmit signal $s$ and noise $v$. The linear gain $G$ can be evaluated as $G=\mathbb{E}(rs^*)/p$. The signal variance is given by $|G|^2p$. Using the fact that $s$, $v$ and $d$ are uncorrelated, the distortion variance is $\mathbb{E}(|d|^2)=\mathbb{E}(|r|^2)-|G|^2p-\sigma_v^2$. The \gls{snr}, \gls{sdr} and \gls{sndr} are thus given by
\begin{align*}
\text{SNR}&=\frac{|G|^2p}{\sigma_v^2},\ \text{SDR}=\frac{|G|^2p}{\mathbb{E}(|d|^2)},\ \text{SNDR}=\frac{|G|^2p}{\mathbb{E}(|d|^2)+\sigma_v^2},
\end{align*}
where the expectations can be evaluated using the statistics of the transmit symbols $s$.
\begin{figure}[t!]
	\centering 
	\resizebox{0.98\linewidth}{!}{%
		{\footnotesize % This file was created by tikzplotlib v0.9.8.
\begin{tikzpicture}

\begin{axis}[
legend cell align={left},
legend style={
  fill opacity=0.8,
  draw opacity=1,
  text opacity=1,
  at={(0.03,0.03)},
  anchor=south west,
  draw=white!80!black
},
tick align=outside,
tick pos=left,
x grid style={white!69.0196078431373!black},
xlabel={Back-off \(\displaystyle p/p_{sat}\) [dB]},
xmin=-10, xmax=2,
xtick style={color=black},
y grid style={white!69.0196078431373!black},
ylabel={Magnitude [dB]},
ymin=10, ymax=30,
ytick style={color=black}
]
\fill[draw=white,shade, right color=blue!0, left color=blue!25]
        (axis cs:-10,10)--(axis cs:-6,10)--
        (axis cs:-6,30)--(axis cs:-10,30)--cycle;
\fill[draw=white,shade, right color=red!25, left color=red!0]
        (axis cs:-6,10)--(axis cs:2,10)--
        (axis cs:2,30)--(axis cs:-6,30)--cycle;
\addplot [thick, red, dotted]
table {%
-10 34.6540372081484
-9.36842105263158 32.6087150387162
-8.73684210526316 31.087991157037
-8.10526315789474 28.5407799289944
-7.47368421052632 27.0188308104076
-6.84210526315789 25.2244430197434
-6.21052631578947 23.4864896010509
-5.57894736842105 22.281926592082
-4.94736842105263 20.9516151230674
-4.31578947368421 19.4334747135741
-3.68421052631579 18.4118448547648
-3.05263157894737 17.334482002254
-2.42105263157895 16.2072014911977
-1.78947368421053 15.2499110409702
-1.15789473684211 14.4326022323849
-0.526315789473685 13.5910595496853
0.105263157894736 12.8848559366571
0.736842105263158 12.1859137160307
1.36842105263158 11.5456074599341
2 10.909302123352
};
\addlegendentry{SDR - MRT}
\addplot [thick, red, dashed]
table {%
-10 26.035446135248
-9.36842105263158 25.9469147439056
-8.73684210526316 25.9313475877458
-8.10526315789474 25.8318953975212
-7.47368421052632 25.8142600059824
-6.84210526315789 25.7111955751059
-6.21052631578947 25.6093651895551
-5.57894736842105 25.4640549900726
-4.94736842105263 25.2982471831642
-4.31578947368421 25.1514171842784
-3.68421052631579 24.9473362625768
-3.05263157894737 24.7196013827521
-2.42105263157895 24.5069131638032
-1.78947368421053 24.1926574514217
-1.15789473684211 23.8744640179319
-0.526315789473685 23.5219202367532
0.105263157894736 23.1815215415407
0.736842105263158 22.7497392424701
1.36842105263158 22.3294778265971
2 21.888514902443
};
\addlegendentry{SNR - MRT}
\addplot [thick, red]
table {%
-10 25.4761276389416
-9.36842105263158 25.0987041137553
-8.73684210526316 24.7751587891577
-8.10526315789474 23.9681690284791
-7.47368421052632 23.3646157634962
-6.84210526315789 22.4507035696087
-6.21052631578947 21.4091881880169
-5.57894736842105 20.5775381255968
-4.94736842105263 19.5921282444014
-4.31578947368421 18.4021321922284
-3.68421052631579 17.5409549526337
-3.05263157894737 16.6061228822524
-2.42105263157895 15.608082772463
-1.78947368421053 14.728502158644
-1.15789473684211 13.964865559724
-0.526315789473685 13.1708016249153
0.105263157894736 12.497076718802
0.736842105263158 11.820323936217
1.36842105263158 11.1973749511382
2 10.5758128465993
};
\addlegendentry{SNDR - MRT}
\addplot [thick, blue, dotted]
table {%
-10 38.8915094938269
-9.36842105263158 38.7251790789144
-8.73684210526316 38.3830179039907
-8.10526315789474 37.2485292515947
-7.47368421052632 35.2118015969666
-6.84210526315789 33.5516025596142
-6.21052631578947 30.8009662900729
-5.57894736842105 28.3053371809272
-4.94736842105263 26.2656099096762
-4.31578947368421 24.2894449226464
-3.68421052631579 22.6716116133152
-3.05263157894737 20.9740936148555
-2.42105263157895 19.5211783910918
-1.78947368421053 18.2988041287477
-1.15789473684211 17.0144163408321
-0.526315789473685 15.9807667206632
0.105263157894736 14.9705009279909
0.736842105263158 14.1348181271837
1.36842105263158 13.2672774680498
2 12.5248751162268
};
\addlegendentry{SDR - Z3RO}
\addplot [thick, blue, dashed]
table {%
-10 22.966432354107
-9.36842105263158 22.9518488694913
-8.73684210526316 22.9729883350053
-8.10526315789474 22.9898291201333
-7.47368421052632 22.9347533335319
-6.84210526315789 22.9336762673688
-6.21052631578947 22.9815835576264
-5.57894736842105 22.8949973060713
-4.94736842105263 22.8269088047744
-4.31578947368421 22.7913224766756
-3.68421052631579 22.6078861295905
-3.05263157894737 22.4958848984533
-2.42105263157895 22.2918014228719
-1.78947368421053 22.1274528422938
-1.15789473684211 21.86426504346
-0.526315789473685 21.6258683367107
0.105263157894736 21.3109498345807
0.736842105263158 21.0097874320717
1.36842105263158 20.637831617853
2 20.2526532571131
};
\addlegendentry{SNR - Z3RO}
\addplot [thick, blue]
table {%
-10 22.8568387040202
-9.36842105263158 22.8384086822372
-8.73684210526316 22.8497894301343
-8.10526315789474 22.8299123807711
-7.47368421052632 22.6849884418075
-6.84210526315789 22.572429357034
-6.21052631578947 22.3175055179895
-5.57894736842105 21.7967919216174
-4.94736842105263 21.2041538675991
-4.31578947368421 20.4658033226745
-3.68421052631579 19.6293320326912
-3.05263157894737 18.6583720597021
-2.42105263157895 17.6788949171639
-1.78947368421053 16.7939153512576
-1.15789473684211 15.784554932175
-0.526315789473685 14.9339246717849
0.105263157894736 14.0635138156947
0.736842105263158 13.3236721583602
1.36842105263158 12.5366662599604
2 11.8476900575586
};
\addlegendentry{SNDR - Z3RO}
\draw (axis cs:-4,27.5) node[
  scale=0.5,
  anchor=base west,
  text=black,
  rotate=0.0
]{\LARGE \textbf{Saturation regime}};
\draw (axis cs:-9.5,27.5) node[
  scale=0.5,
  anchor=base west,
  text=black,
  rotate=0.0
]{\LARGE \textbf{Linear regime}};

\end{axis}

\end{tikzpicture}}
		%		{\includegraphics[scale=1]{SDR_SNR_SNDR.pdf}}
	} 
	\caption{\gls{snr}, \gls{sdr} and \gls{sndr} of the \gls{mrt} versus \gls{z3ro} precoders for a Rapp \gls{pa} model with smoothness parameter $S=2$ and $M=64$, $M_s=4$, $M^2\beta p/\sigma_v^2=26$ dB, for a pure \gls{los} channel. The \gls{z3ro} precoder \gls{sndr} outperforms \gls{mrt} in the saturation regime.}
	\label{fig:SDR_SNR_SNDR} 
		\vspace{-1em}
\end{figure}
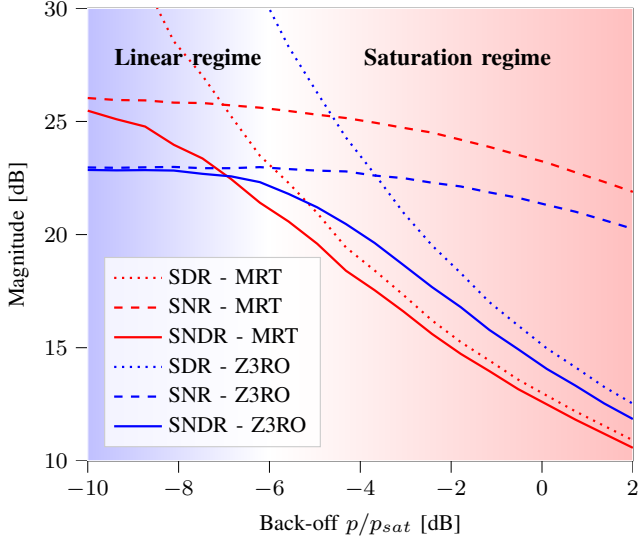
In Fig~\ref{fig:SDR_SNR_SNDR}, the performance is shown for a Rapp \gls{pa} model~\cite{dlr33776}
\begin{align*}
y_m=\frac{x_m}{\left(1+\left|\frac{x_m}{\sqrt{p_{\mathrm{sat}}}}\right|^{2S}\right)^{\frac{1}{2S}}},
\end{align*}
which does not only contain a third order distortion term. In the figure, the saturated power of the \gls{pa} $p_{\mathrm{sat}}$ is varied while the other parameters are given in the figure caption. The signal $s$ has a complex Gaussian distribution, which models an OFDM modulated signal. For low values of $p/p_{\mathrm{sat}}$, the \gls{pa} is in the linear regime and the \gls{mrt} achieves an optimal performance. The \gls{z3ro} precoder performs not as well given its reduced array gain. As the ratio $p/p_{\mathrm{sat}}$ increases, the \gls{pa} enters the saturation regime and distortion becomes non-negligible. \gls{mrt} is outperformed by the \gls{z3ro} precoder, which is only limited by distortion orders higher than three. In conclusion, the advantages of the \gls{z3ro} precoder versus \gls{mrt} can be seen in two ways, in the saturation regime:

1) For a same \gls{sndr}, the \gls{z3ro} precoder can work at a larger ratio $p/p_{\mathrm{sat}}$, implying an enhanced energy efficiency. As an example, to achieve a \gls{sndr} of 15 dB the \gls{z3ro} precoder can work with a ratio $p/p_{\mathrm{sat}}$ about 1.5 dB higher.
	
2) For a same $p/p_{\mathrm{sat}}$, the \gls{z3ro} precoder achieves a larger \gls{sndr}, implying an enhanced capacity. As an example, for $p/p_{\mathrm{sat}}=-2$ dB, the \gls{sndr} can be boosted by about 2.5 dB.

% Note that the above results highly depend on the \gls{pa} transfer function.

\section{Conclusion}\label{section:conclusion}

In this work, the \gls{z3ro} precoder is introduced to allow large array systems to work closer to saturation while compensating for nonlinear distortion. The design has a similar implementation complexity as \gls{mrt}, while it cancels the third-order distortion, without requiring knowledge of the \gls{pa} model and the channel statistics. Its array gain penalty is negligible in the large antenna case.

Perspectives include a novel precoder which: i) works in a multi-user setting, ii) compensates for higher order distortion terms, iii) takes into account \gls{pa} differences and iv) relaxes the zero distortion constraint.

\section{Appendix}
\label{section:appendix}

%\subsection{Proof of Theorem~\ref{theorem:3rd_order}} \label{appendix:theorem_3rd_order}

To solve problem (\ref{eq:all_real_3rd_problem}), the Lagrangian can be formed
\begin{align*}
L&=\left(\sum_{m=0}^{M-1} g_m\right)^2 - \lambda \left(\sum_{m=0}^{M-1} g_m^2-M\right)-\mu \sum_{m=0}^{M-1} g_m^3.
\end{align*}
% The derivative with respect to $g_m$ is given by
% \begin{align*}
% \frac{dL}{dg_m}=2\sum_{m'=0}^{M-1}g_{m'}-2\lambda g_{m}-3\mu g_{m}^2.
% \end{align*}
Setting the derivative with respect to $g_m$ to zero, a quadratic equation is obtained. Hence, only two values are possible for $g_m$, that we denote by $\alpha$ and $\delta$. Consider that $M_s$ is the number of coefficients $g_m$ set to $\delta$ and $M-M_s$ are set to $\alpha$, with $M_s>0$ (otherwise the zero distortion constraint cannot be satisfied). For a fixed value of $M_s$, applying the two constraints gives
\begin{align*}
\alpha&=\frac{\sqrt{M}}{\sqrt{(M-M_s) + M_s^{1/3} (M-M_s)^{2/3}}}\\
\delta&=-\alpha\left(\frac{M-M_s}{M_s}\right)^{1/3}.
\end{align*}
Hence, setting $M_s$ values of $g_m$ to $\delta$ and $M-M_s$ to $\alpha$ give critical points of the Lagrangian. We fix $M/2>M_s$ to avoid symmetrical/equivalent solutions. %The corresponding values of $\lambda$ and $\mu$ for each $M_s$ can be found by setting the Lagrangian derivative to zero and solving a linear system of two equations with two unknowns. %\TODO{for later: compute them for the sake of completeness? and check that they are well behaved?}. 
Applying the change of variable $w_n=g_me^{\jmath \phi_n}$ leads to the first result of Theorem~\ref{theorem:3rd_order}. Let us optimize the resulting array gain with respect to $M_s$ 
\begin{align*}
\max_{x} M^2\frac{(x^{2/3}-(1-x)^{2/3})^2}{x^{1/3}+(1-x)^{1/3}},
\end{align*}
where $x=M_s/M \in \ ]0,1/2[$, with open intervals since $0<M_s<M/2$. The objective function monotonically decreases on the domain $]0,1/2[$, meaning that one has to choose the minimal $x$ and thus $M_s$ possible, \textit{i.e.}, $M_s=1$.

%\TODO{for later: go more rigorous by studying the second derivative, showing convexity and thus max of a convex function is on the border of the convex set?}

%\TODO{proof convergence of SNR? or trivial?}

%We can solve for the value of $\gamma=\sum_{m'=0}^{M-1}g_{m'}$ by summing over all $m$ and solving the following equation
%\begin{align*}
%\gamma&=\frac{-M\lambda\pm M\sqrt{\lambda^2+6\mu \gamma}}{3\mu}\\
%3\mu\gamma+M\lambda&=\pm M\sqrt{\lambda^2+6\mu \gamma}.%\\
%%(3\mu\gamma+M\lambda)^2&=M^2(\lambda^2+6\mu \gamma)\\
%%9\mu^2\gamma^2+M^2\lambda^2+6N\mu\lambda\gamma&=M^2(\lambda^2+6\mu \gamma)\\
%%0&=\gamma(9\mu^2\gamma+6N\mu\lambda-6N^2\mu).
%\end{align*}
%Taking the square and simplifying gives
%\begin{align*}
%\gamma(3\mu^2\gamma+2N\mu\lambda-2N^2\mu)&=0.
%\end{align*}
%The solution $\gamma=0$ is not interesting since it would give a zero array gain. The other solution is
%\begin{align*}
%%	3\mu^2\gamma+2\mu\lambda-2N^2\mu&=0\\
%	\gamma=\frac{2N^2-2N\lambda}{3\mu}.
%\end{align*}
%Inserting this expression into (\ref{eq:g_m_general}) gives
%\begin{align*}
%g_m&=\frac{-\lambda\pm \sqrt{\lambda^2+4N(M-\lambda)}}{3\mu}. \label{eq:g_m_general}
%\end{align*}
%Inserting it into (\ref{eq:derivative_L_0}), summing over $m$ allows to solve for $\mu$ taking into account first constraint on transmit power. Then, reinserting this value of $\mu$ into (\ref{eq:derivative_L_0}) and applying the other constraint, we should be able to find the value of $\lambda$... 
%
%\TODO{double check if $\gamma^2$ gives the array gain, with corresponding values values of $g_m$ and the ones of $\mu$ and $\lambda$}

\section*{Acknowledgment}
{\ The research reported herein was partly funded by Huawei in the framework of the DASPOP project.}

% \vspace{-0.5em}
%\footnotesize
\scriptsize 
\bibliographystyle{IEEEtran}
\bibliography{IEEEabrv,IEEEreferences}

\end{document}